\newcommand\q{\mathbf{q}}
\newcommand\s{\mathbf{s}}
\newcommand\Q{\mathcal{Q}}
\newcommand\Shape{\mathcal{S}}
\newsavebox{\astrutbox}
\sbox{\astrutbox}{\rule[-5pt]{0pt}{20pt}}
\newtheorem{theorem}{Theorem}
\newtheorem{prop}{Proposition}
\title{Generalized Scallop Theorem for Linear Swimmers}
\author{T. Chambrion}
\address{Institut \'Elie Cartan UMR 7502, Nancy-Universit\'e,
CNRS, INRIA, B.P.~239, F-54506 Vandoeuvre-l\`es-Nancy Cedex,
France, and INRIA
Lorraine, Projet CORIDA}
\email{thomas.chambrion@iecn.u-nancy.fr}
\author{A. Munnier}
\address{Institut \'Elie Cartan UMR 7502, Nancy-Universit\'e,
CNRS, INRIA, B.P.~239, F-54506 Vandoeuvre-l\`es-Nancy Cedex,
France, and INRIA
Lorraine, Projet CORIDA}
\email{alexandre.munnier@iecn.u-nancy.fr}
\thanks{Authors both supported by CPER MISN AOC. First author supproted by ANR GCM, ERC Boscain and BQR Lorraine, and second author by ANR CISIFS and ANR GAOS}
\begin{document}

\begin{abstract}
In this article, we are interested in studying locomotion strategies for a class of shape-changing bodies swimming in a fluid. This class consists of swimmers subject to a particular linear dynamics, which includes the two most investigated limit models in the literature: swimmers at low and high Reynolds numbers. Our first contribution is to prove that although for these two models the locomotion is based on very different physical principles, their dynamics are similar under symmetry assumptions. Our second contribution is to derive for such swimmers a purely geometric criterion allowing to determine wether a given sequence of shape-changes can result in locomotion. This criterion can be seen as a generalization of Purcell's scallop theorem (stated in \citep{Purcell:1977aa}) in the sense that it deals with a larger class of swimmers and address the complete locomotion strategy, extending the usual formulation in which only periodic strokes for low Reynolds swimmers are considered. 
\end{abstract}
\maketitle
\section{Introduction}
\subsection{About Purcell's theorem}
The specificity of swimmers at low Reynolds numbers (like microorganisms) is that inertia for both the fluid and the body can be neglected in the equations of motion. Consequently, as highlighted by Purcell in his seminal article \citep{Purcell:1977aa}, the mechanisms they used to swim are quite counter-intuitive and can give rise to surprising phenomena, the most famous one being undoubtedly illustrated by the so-called scallop theorem. Roughly speaking, this theorem states that periodic strokes consisting of {\it reciprocal} shape-changes (i.e. a sequence of shape-changes invariant under time reversal) cannot result in locomotion (i.e. does not allow to achieve a net displacement of arbitrary length) in a viscous fluid.  Considering the prototypical example of the scallop (as sketched on the left of Fig.~\ref{purcell}), which can only open and close its shell, and assuming that the animal lives in a low Reynolds environment (which it does not), Purcell explains that {\it it can't swim because it only has one hinge, and if you have only one degree of freedom in configuration space, you are bound to make a
reciprocal motion. There is nothing else you can do}. In addition to the light this result casts on the understanding of the hydrodynamics of swimming microorganisms, it has to be taken into account as a serious pitfall for the design of micro-robots, for which engineers' interest grows along with the number of applications that have been envisioned for them (such as, for instance, drug deliverers in the area of biomedicine).
\begin{figure}
\centerline{\input{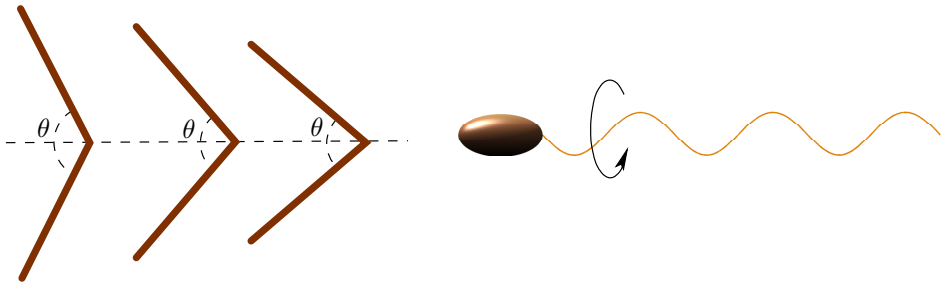}}
\caption{\label{purcell}On the left, Purcell's 2D scallop cannot swim in a viscous fluid... and neither can it in a potential flow. On the right, 3D Purcell's corkscrew can swim in a viscous fluid and probably also in a potential one.}
\end{figure}

The classical assumptions of Purcell's theorem are that the shape-changes have to be time periodic and the {\it sequence of shapes} (over a stroke), invariant under time reversal. Notice that the latter condition does not mean that the shape-changes have to be strictly time-reversal invariant, with the same forward and backward rate, but only that the succession of shapes is the same when viewed forward and backward in time. Under these hypotheses, Purcell concludes that the swimmer comes back to its initial position after performing a stroke. Going through Purcell's article, one will find no proof for this result. However, a huge literature devoted to this topic has been produced since then and mathematical proofs can be found, for instance, in the article of E. Lauga and T.R. Powers \citep{Lauga:2009aa} (which contains also an impressive list of references and to which we refer for a comprehensive bibliography on this topic) and in  \citep{DeSimone:2009aa} by DeSimone et al. 

\subsection{Beyond Purcell's theorem}
Although, as already mentioned, Purcell does not provide a rigorous proof of his famous theorem, he explains that the keystone of his result relies on that inertia is not taking into account in the modeling of low Reynolds swimmers, allowing in particular the Navier-Stokes equations governing the fluid flow to be simplified into the steady Stokes equations. Our first main contribution in this article will be to prove that more widely, Purcell's theorem in its original form still holds true for a class of swimmers subject to a particular linear dynamics that will be made precise later on. This class obviously includes low Reynolds swimmers but also high Reynolds swimmers extensively studied in the literature (see for instance the article \citep{Kanso:2005aa} of E. Kanso et al. or \citep{Chambrion:2010aa} by T. Chambrion and A. Munnier, and references therein).

Purcell's theorem does not admit any reverse statement allowing to determine wether a sequence of shape-changes violating the hypotheses can result in locomotion. To illustrate this idea, consider Fig.~\ref{non_periodic} on which is plotted the graphs of the functions $t\in\mathbf R_+\mapsto\theta_j(t)$ (for $j=1,2$), where $t$ stands for the time and each $\theta_j(t)$ ($j=1,2$) gives the value of the angle of the scallop's hinge, as sketched on the left of Fig.~\ref{purcell}. 
\begin{figure}
\centerline{\includegraphics[width=.99\textwidth]{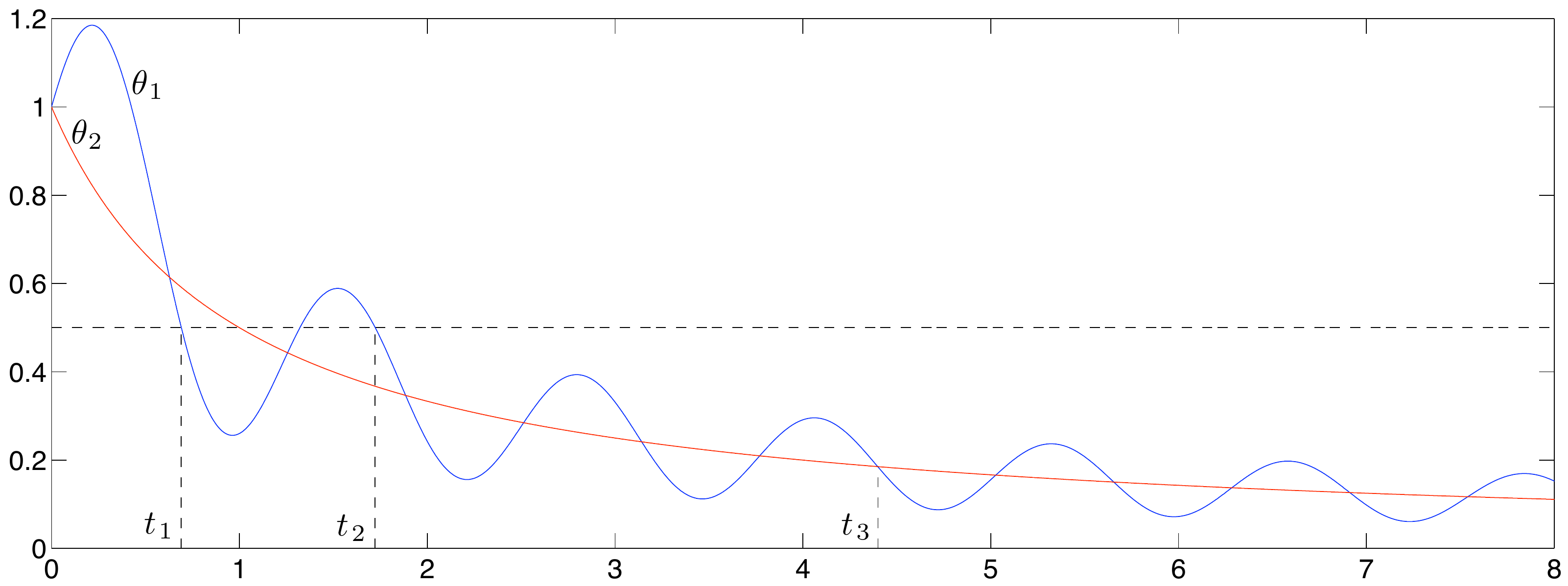}}
\caption{\label{non_periodic}Assume that the curves above give the value of the scallop's hinge angle $\theta$ (see the left hand side of Fig.~\ref{purcell}) with respect to time. Although these shape-changes are neither time reversal invariant nor time periodic, the scallop is at the same position at the times $t_1$ and $t_2$. Besides, the displacement of the scallop between the times $t=0$ and $t=t_3$ depends only on the values of $\theta(0)$ and $\theta(t_3)$ and not on the shape of the curve in between. }
\end{figure}
None of these sequences of shape-changes is neither periodic nor time reversal invariant. However, anybody familiar enough with Purcell's result would agree that the scallop undergoing the shape-changes corresponding to the function $\theta_1$ will not move on average, between the times $t_1$ and $t_2$. Likewise, the mollusk will be at the same place at the time $t_3$ after performing either sequence corresponding to $\theta_1$ or $\theta_2$. One may also wonder where the animal would go asymptotically, as time goes to infinity. Following Purcell's reasoning, probably not very far and more precisely, exactly at the same distance as if the angle  would have ranged from 1 to 0 over a finite time interval... because time does not matter in the low Reynolds world. This last property suggests that the hypotheses of the theorem could be restated in a purely geometric framework and one may even think at this point that, sticking to the scallop example, a reasonable statement could be something like: {\em the displacement of the scallop is a continuous function of the angle range}. As an obvious consequence, one would deduce that a bounded angle range implies a bounded displacement. This is true but unfortunately cannot be extended to the general case. Indeed, consider now an other example of swimmer, pictured on the right of Fig.~\ref{purcell} and called by Purcell {\it the corkscrew} (and whose way of swimming is quite obvious). The configuration space is the one dimensional torus $\mathcal S^1$ and the rotation of the flagella is known to produce a net displacement of the hypothetic animal. On Fig.~\ref{rotating_flagella} is drawn the graph of a function giving the value of the angle of rotation, valued in $\mathcal S^1$, with respect to the time. 
\begin{figure}
\centerline{\includegraphics[width=.8\textwidth]{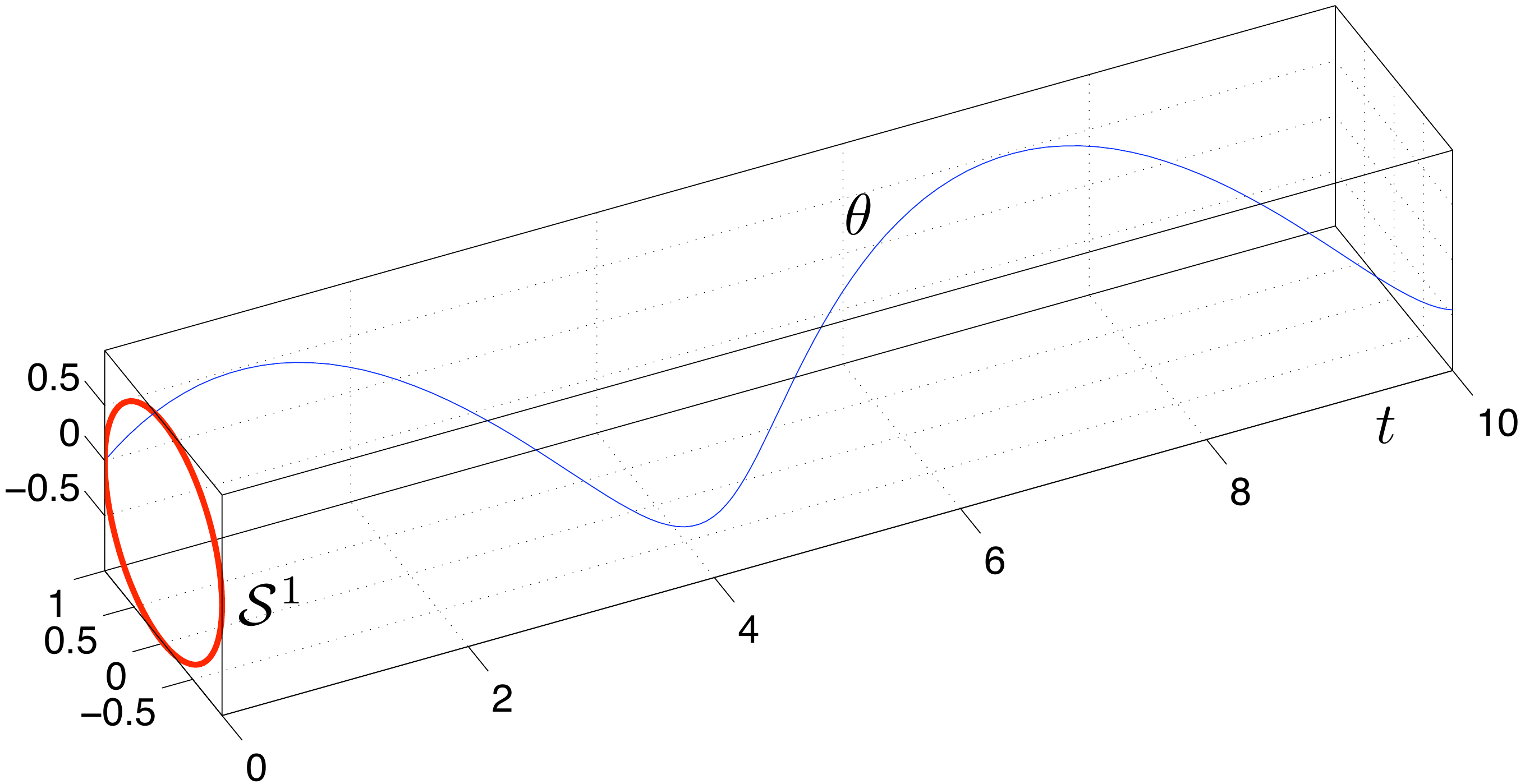}}
\caption{\label{rotating_flagella}The graph gives the angle of rotation $\theta$ of the corkscrew's flagella (pictured on the right hand side of Fig.~\ref{purcell}) valued in the one dimensional torus $\mathcal S^1$ with respect to time. By looking only at the projection of the curve on $\mathcal S^1$, one cannot determine how many tours have been performed. This is an evidence that although Purcell's scallop theorem can be addressed using a purely geometric point of view, it cannot be done without the recourse to the notion of universal cover.}
\end{figure}
It is not so easy to reiterate the exercise of the preceding example and to derive a purely geometric criterion (i.e. time independent) allowing to determine wether the displacement is bounded or not. The reason is that, by looking only at $\mathcal S^1$, it is not possible to determine how many tours have been performed by the flagella. To do so, we have to look at the angle as valued not in $\mathcal S^1$ but in the universal cover of the manifold. The notion of {\it universal cover} will allow us to state a generalized and purely geometric version of the scallop theorem, which will be the second main contribution of the paper.

\subsection{Outline of the paper}
In Section 2, we present
an abstract framework and state
a generalized scallop theorem for a class of shape-changing bodies, called {\it linear swimmers}.
This is quite classical material, except for a topological interpretation
 of what a {\it reciprocal motion} is, which may be original.
In Section 3, we prove that swimmers
at low Reynolds numbers and high Reynolds numbers (in a potential fluid and with some symmetry assumptions)
are linear swimmers and meet the requirements of our main theorem. Finally, a numerical simulation of a swimmer
in a perfect fluid is given in Section 4.

\section{Abstract result}
\subsection{General assumptions on the swimmer}
We assume that any possible shape of the swimmer can be described by a so-called {\it shape variable} $\s$ 
living in a Banach space $\mathcal S$ (which can be infinite dimensional).
So the shape-changes are described by means of a smooth {\it shape function} $t\in\mathbf R_+\mapsto\mathbf s(t)\in\mathcal S$ where $t$ stands for the time and $\dot{\mathbf s}=d\mathbf s/dt$ is the rate of change. The variable $\q\in\mathcal Q$, where $\Q$ is a
smooth,  finite dimensional Riemannian manifold, gives the position of the swimmer in the fluid.
For instance, to describe the position of Purcell's scallop, we would choose $\mathcal Q=\mathbf R$ because the scallop can only move along a straight line. Since its shape is thoroughly described by the angle $\theta$, we would have $\mathbf s=\theta$ and $\mathcal S=\mathbf{R}$ or $\Shape=\mathbf R/2\pi$ as well. 
\begin{figure}
\centerline{\input{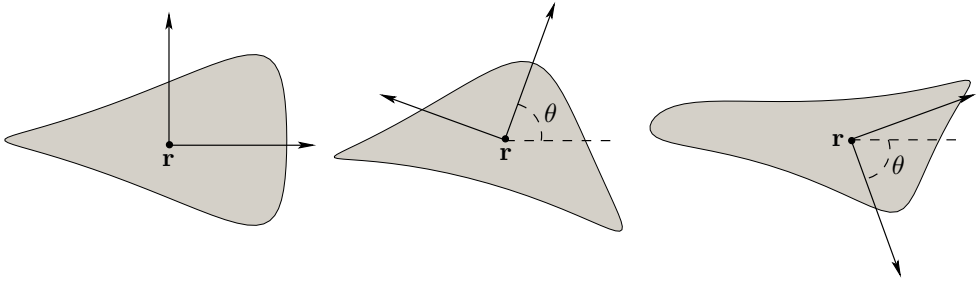}}
\caption{\label{TandA}Three examples of shapes of the authors' 2D-{\it amoeba} model described in \citep{Chambrion:2010aa}. The frame is attached to the body and moves along with it. The shape variable $\s$ is here a complex sequence and $\mathbf q=(\mathbf r,R(\theta))\in\mathbf R^2\times{\rm SO}(2)$ where $\mathbf r$ is the position of the center of mass of the swimmer (expressed in a fixed Galilean frame) and $R(\theta)$ is a rotation matrix of angle $\theta$, giving its orientation.}
\end{figure}
In \citep{Chambrion:2010aa},
we give an example of 2D-swimmer (see Fig.~\ref{TandA}) in an infinite
extent of perfect fluid with potential flow.
In this case, the
manifold $\Q$ is $\mathbf{R}^2 \times {\rm SO}(2)$, while the shape space $\Shape$
is an infinite dimensional Banach space, consisting of complex sequences $\mathbf s=({s}_k)_{k\geq 1}$ ($s_k\in\mathbf C$, $k\geq 1$) and endowed with the norm 
$\|\mathbf s\|_{\mathcal S}=\sum_{k=1}^\infty k |s_k|$.

Notice however that  physical and mathematical constraints usually affect the pair $(\mathbf s,\dot{\mathbf s})$ and lead to the definition of {\it allowable shape function}. It entails in particular  that $\mathbf s$ is bound to remain in a subset of $\mathcal S$ and that $\dot{\mathbf s}$ cannot take any value in $\mathcal S$ either. As an example, let us mention the constraint of self-propulsion, which means that although directly prescribed, the shape-changes have to result from the work of hypothetical internal forces, occurring within the swimmer (like for instance the work of muscles). This constraint prevents, for instance, translations 
to be considered as possible shape-changes. At this point, we also add the constraint that the path $\gamma=\mathbf s(\mathbf R_+)$ be included in $X$, a one dimensional submanifold immersed in $\mathcal S$. From a physical point of view, it means that at any moment, there is only one degree of freedom in the shape-changes (see Fig.~\ref{non_crossing}).
\begin{figure}
\centerline{\input{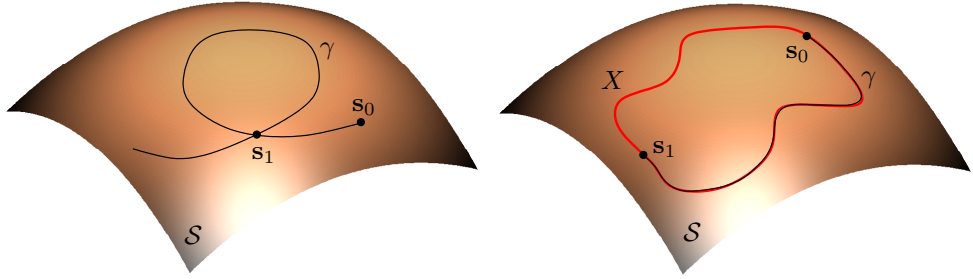}}
\caption{\label{non_crossing}On the left, the path $\gamma$ is not included in a one dimensional submanifold of $\mathcal S$. It means that for both moments corresponding to $\mathbf s=\mathbf s_1$, there are two degrees of freedom for the shape-changes (each one corresponding to a branch). On the right, notice that the submanifold $X$ can be bounded or not.}
\end{figure}
We recall that two complete examples of modeling (the low and the high Reynolds swimmers) are given in Section~\ref{SEC:modelling}.
\subsection{Dynamics of linear swimmers}\label{SEC_assumption_linear_dynamics}
Denote by $T_\q\Q$ the tangent space to $\Q$ at the point $\q$, by $T\Q$ the tangent bundle to $\Q$ and by ${\mathcal L}(\Shape,{T\Q})$  the space 
of the linear mappings from $\Shape$ to $T\Q$. We call {\it linear swimmer}, any model of shape-changing body whose dynamics has the form:
\begin{equation}\label{EQ_main}
 \frac{d}{dt}
\mathbf{q}(t)=\langle\mathbf{F}(\mathbf{q},\mathbf{s}),\dot{\mathbf{s}}
(t)\rangle,\qquad(t>0),
\end{equation}
where $\mathbf{F}:\Q \times \Shape \rightarrow {\mathcal L}(\Shape,{T\Q})$ is a smooth function satisfying 
\begin{enumerate}[(i)]
\item $\langle\mathbf{F}(\mathbf{q},\s),\tilde{\mathbf s}\rangle\in T_{\mathbf{q}} \Q$ for every
$\q\in\Q$ and every $\s,\tilde\s\in\Shape$;
\item There exists $K>0$ such that $\|\langle\mathbf{F}(\mathbf{q},\mathbf{s}),\tilde{\mathbf{s}} \rangle \|_{{T_\q\Q}}\leq K \|\s\|_{\mathcal S} \| \tilde{\s}\|_{\mathcal S}$ for every
$\q\in\Q$ and every $\s,\tilde\s\in\Shape$.
\end{enumerate}
The Cauchy-Lipschitz theorem guarantees that,
for any $\q_0\in\Q$ and for any smooth allowable
shape function $\mathbf{s}:\mathbf R_+\to \Shape$ there exists a unique solution to
(\ref{EQ_main}) with Cauchy data $\q(0)=\q_0$. 

\subsection{Generalized scallop theorem}

The main feature of linear swimmers' dynamics is the following reparameterization property.
\begin{prop}\label{THE_reparam}
Let any allowable control function $\mathbf s:\mathbf{R_+}\rightarrow
\Shape$  and any point $\mathbf{q}_0 \in \Q$ be given. Denote by
$\mathbf{q}:\mathbf{R_+}\rightarrow \Q$ the solution to Equation (\ref{EQ_main}) with initial
condition  $\mathbf{q}_0$. Then, 
for any  $\mathcal C^1$ function $\beta:\mathbf{R}_+
\rightarrow \mathbf{R_+}$,  the solution
$\mathbf{q}_{\beta}:\mathbf{R}
_+\rightarrow  \Q$ to Equation (\ref{EQ_main}) corresponding to the
shape-changes $\mathbf{s}_{\beta}:=t\in\mathbf R_+\mapsto \mathbf s(\beta(t))\in\Shape$,
with
initial condition $\mathbf{q}(\beta(0))$ satisfies
$\mathbf{q}_{\beta}=\mathbf{q} \circ \beta$.
\end{prop}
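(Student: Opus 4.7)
The plan is to show that the curve $t \mapsto \mathbf{q}(\beta(t))$ satisfies the same initial value problem as $\mathbf{q}_\beta$, and then appeal to uniqueness from the Cauchy-Lipschitz theorem, which is guaranteed by hypothesis (ii) on $\mathbf{F}$ in Section~\ref{SEC_assumption_linear_dynamics}. The heart of the argument is the interplay between the chain rule and the \emph{linearity} of $\mathbf{F}(\mathbf{q},\mathbf{s})$ in its action on $\dot{\mathbf{s}}$, which is precisely the defining feature of linear swimmers.

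First I would set $\tilde{\mathbf{q}}(t) := \mathbf{q}(\beta(t))$ and differentiate using the chain rule in $\Q$:
\begin{equation*}
\frac{d}{dt}\tilde{\mathbf{q}}(t) = \dot{\beta}(t)\,\dot{\mathbf{q}}(\beta(t)) = \dot{\beta}(t)\,\langle \mathbf{F}(\mathbf{q}(\beta(t)),\mathbf{s}(\beta(t))),\dot{\mathbf{s}}(\beta(t))\rangle,
\end{equation*}
where the second equality uses that $\mathbf{q}$ solves (\ref{EQ_main}). Next I would invoke property (i), which says $\langle \mathbf{F}(\mathbf{q},\mathbf{s}),\cdot\rangle$ is an element of $\mathcal{L}(\Shape,T\Q)$ and in particular \emph{linear} in its second argument, to pull the scalar $\dot{\beta}(t)\geq 0$ inside:
\begin{equation*}
\dot{\beta}(t)\,\langle \mathbf{F}(\mathbf{q}(\beta(t)),\mathbf{s}(\beta(t))),\dot{\mathbf{s}}(\beta(t))\rangle = \langle \mathbf{F}(\tilde{\mathbf{q}}(t),\mathbf{s}_\beta(t)),\,\dot{\beta}(t)\dot{\mathbf{s}}(\beta(t))\rangle.
\end{equation*}
Recognizing $\dot{\beta}(t)\dot{\mathbf{s}}(\beta(t)) = \frac{d}{dt}[\mathbf{s}(\beta(t))] = \dot{\mathbf{s}}_\beta(t)$ shows that $\tilde{\mathbf{q}}$ satisfies (\ref{EQ_main}) with control $\mathbf{s}_\beta$. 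Since $\tilde{\mathbf{q}}(0) = \mathbf{q}(\beta(0))$ by construction, $\tilde{\mathbf{q}}$ and $\mathbf{q}_\beta$ solve the same Cauchy problem, hence agree.

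The only delicate point is that $\mathbf{q}_\beta$ is defined through (\ref{EQ_main}) only if the reparametrized control $\mathbf{s}_\beta$ is itself allowable. This is where one must check that $\mathbf{s}_\beta$ inherits the required regularity (which follows from $\beta\in\mathcal{C}^1$ and $\mathbf{s}$ smooth) and that the image $\mathbf{s}_\beta(\mathbf{R}_+)\subset \mathbf{s}(\mathbf{R}_+) = \gamma \subset X$, so the one-dimensional submanifold constraint and the self-propulsion constraint are preserved. This verification is routine, so the argument should go through without obstacle; the single substantive input is the linearity of $\mathbf{F}$ in the velocity $\dot{\mathbf{s}}$, which is what makes the time parametrization irrelevant and ultimately underlies the generalized scallop theorem to come.
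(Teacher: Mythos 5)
Your proof is correct and follows the same strategy as the paper's: differentiate $\mathbf q\circ\beta$ via the chain rule, use linearity of $\langle\mathbf F(\mathbf q,\mathbf s),\cdot\rangle$ to pull $\dot\beta$ inside and identify the result with the ODE for control $\mathbf s_\beta$, then conclude by uniqueness from Cauchy--Lipschitz. The only minor slip is the parenthetical claim that $\dot\beta(t)\geq 0$ --- a $\mathcal C^1$ map $\mathbf R_+\to\mathbf R_+$ need not be monotone --- but this is harmless since linearity accommodates scalars of any sign.
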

\begin{proof}
The time derivatives of the functions $\mathbf{q}\circ \beta$ and $\mathbf{q}_\beta$ coincide, both being equal to $\langle \mathbf{F}(\q_\beta,\mathbf{s}_\beta),\mathbf{s}_\beta'\rangle\beta' $.
Since we also have $\q\circ\beta(0)=\q_\beta(0)$,
the conclusion follows from a direct application of the Cauchy-Lipschitz theorem.
\end{proof}
On Fig.~\ref{pppurcell} are presented some geometric interpretations of what a linear swimmer is. From Prop.~\ref{THE_reparam} above, one can easily deduce:
\begin{figure}
\centerline{\input{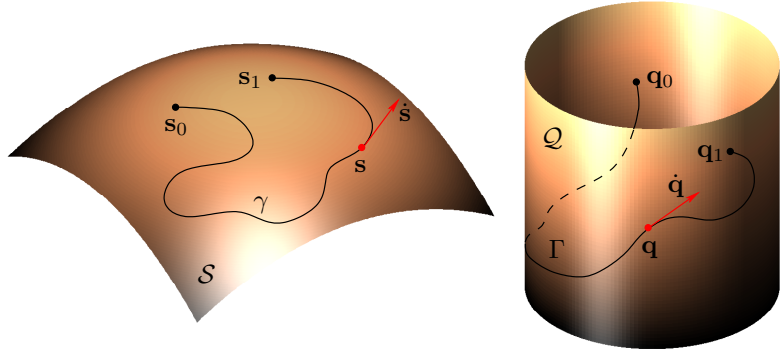}}
\caption{\label{pppurcell}To any path $\gamma$ in the configuration space $\mathcal S$ corresponds a path $\Gamma$ in the space $\mathcal Q$, once the initial point $\mathbf q_0$ has been fixed. Because it is independent of the parameterization of $\gamma$, we can define the mapping $\mathbf s\in\gamma\mapsto\mathbf q(\mathbf s)\in\Gamma$. Moreover, for any time-parameterization, the map $\dot{\mathbf s}\mapsto\dot{\mathbf q}$ is linear.}
\end{figure}
\begin{prop}\label{PRO_flapping}
For any $T>0$, there exists a real number $R>0$ such that
for any  $\mathcal C^1$ function $\beta:\mathbf{R}_+ \rightarrow [0,T]$ and for
any initial condition $\mathbf{q}_0\in\mathcal Q$,  the solution
$\mathbf{q}_{\beta}:\mathbf{R}_+\rightarrow \mathcal Q$ to Equation
(\ref{EQ_main}) corresponding to the shape-changes $t\in\mathbf R_+\mapsto
\mathbf s(\beta(t))\in\mathcal Q$ with initial condition $\mathbf{q}_0$, remains
in the ball of $\mathcal Q$ of center $\mathbf{q}_0$ and radius $R$.
\end{prop}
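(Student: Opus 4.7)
The plan is to reduce the flapping statement to controlling the length of a single reference orbit on the compact time interval $[0,T]$, using the reparametrization identity from Proposition \ref{THE_reparam} together with the a priori estimate (ii).

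First, I introduce the reference trajectory. Set $t_0 := \beta(0) \in [0,T]$ and let $\tilde{\mathbf{q}}:[0,T]\to\Q$ denote the unique solution of (\ref{EQ_main}) corresponding to the original shape function $\mathbf{s}$ and satisfying $\tilde{\mathbf{q}}(t_0)=\mathbf{q}_0$. The proof of Proposition \ref{THE_reparam} is indifferent to the choice of initial time (Cauchy--Lipschitz applies from any time), so the very same computation yields $\mathbf{q}_\beta(t)=\tilde{\mathbf{q}}(\beta(t))$ for every $t\geq 0$. In particular $\mathbf{q}_\beta(\mathbf{R}_+)\subset\tilde{\mathbf{q}}([0,T])$, so it suffices to bound $\tilde{\mathbf{q}}$ on $[0,T]$ by a radius independent of $\mathbf{q}_0$ and of $\beta$.

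Second, I use (ii) to make that bound uniform in $\mathbf{q}_0$. Since $\mathbf{s}$ is smooth and $[0,T]$ is compact, the constant
\begin{equation*}
M := \sup_{t\in[0,T]} \|\mathbf{s}(t)\|_{\Shape}\,\|\dot{\mathbf{s}}(t)\|_{\Shape}
\end{equation*}
is finite, and depends only on $\mathbf{s}|_{[0,T]}$. Assumption (ii) then gives $\|\dot{\tilde{\mathbf{q}}}(t)\|_{T_{\tilde{\mathbf{q}}(t)}\Q}\leq KM$ for every $t\in[0,T]$, with $K$ independent of $\mathbf{q}$. Since the length of a $\mathcal{C}^1$ curve in $\Q$ controls the Riemannian distance, this yields, for every $t\in[0,T]$,
\begin{equation*}
d_\Q\bigl(\mathbf{q}_0,\tilde{\mathbf{q}}(t)\bigr) \leq \int_{\min(t_0,t)}^{\max(t_0,t)}\|\dot{\tilde{\mathbf{q}}}(\tau)\|\,d\tau \leq KMT.
\end{equation*}
Setting $R := KMT$ (which depends only on $T$, $K$ and $\mathbf{s}|_{[0,T]}$, not on $\mathbf{q}_0$ or $\beta$) and applying the inclusion $\mathbf{q}_\beta(\mathbf{R}_+)\subset\tilde{\mathbf{q}}([0,T])$ proves the claim.

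The main and essentially only subtlety is to check that the radius can be chosen independently of $\mathbf{q}_0$. This is exactly what the uniformity in $\mathbf{q}$ of the estimate (ii) buys: the right-hand side of (ii) is independent of the base point on $\Q$, so the pointwise speed bound $KM$ on $\tilde{\mathbf{q}}$ does not degrade as one moves the initial fiber, and the length integral can be pulled out to a constant depending only on the shape data.
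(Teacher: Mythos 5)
Your proof is correct and follows essentially the same route as the paper: reparametrize via Proposition~\ref{THE_reparam} to reduce to a reference orbit over $[0,T]$, then bound its Riemannian length using hypothesis~(ii). One small point in your favor: you correctly anchor the reference trajectory at time $t_0=\beta(0)$ with value $\mathbf{q}_0$, so that $\mathbf{q}_\beta=\tilde{\mathbf{q}}\circ\beta$ holds under the stated Cauchy condition $\mathbf{q}_\beta(0)=\mathbf{q}_0$; the paper writes $\mathbf{q}_\beta=\mathbf{q}\circ\beta$ with $\mathbf{q}(0)=\mathbf{q}_0$, which literally matches Proposition~\ref{THE_reparam} only when $\beta(0)=0$, so your version is a slightly more careful rendering of the same idea (the final bound $R=KMT$ versus $K\int_0^T\|\mathbf{s}\|\|\dot{\mathbf{s}}\|\,du$ is an inessential difference, both being uniform in $\mathbf{q}_0$ and $\beta$).
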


\begin{proof}
Fix $\mathbf{q}_0\in\mathcal Q$ and denote by $\q$ the solution to Equation
(\ref{EQ_main}) with Cauchy data $\q_0$. The interval $[0,T]$ is compact and hence the set
$\{\q_\beta(\mathbf{R}_+)\,:\, \beta \in \mathcal C^1(\mathbf{R}_+,[0,T])\}=
\{ \q(\beta(\mathbf{R}_+))\,:\, \beta \in \mathcal C^1(\mathbf{R}_+,[0,T])\} = \q([0,T])
$ is also compact (because $\q$ is continuous) and hence bounded by some constant, which in addition can be chosen independently of $\q_0$. Indeed, we have, for any $t\in[0,T]$:
$${\rm d}_{\Q}(\mathbf{q}(t),\mathbf{q}_0 )\leq \int_0^t \left \|\langle \mathbf{F}(\q(u),\s(u)),\dot{\mathbf{s}}(u)\rangle \right \|_{T_{\q(u)}\Q}{\rm d}u
\leq  K \int_0^t \|\mathbf{s}(u)\|_{\mathcal S}\|\dot{\mathbf{s}}(u)\|_{\mathcal S}{\rm d}u.
$$
 Since $t\in\mathbf R_+\mapsto \s(t)\in\mathcal S$ is smooth, the last integral is bounded for every $t$ in $[0,T]$ by $\int_0^T \|\mathbf{s}(u)\|_{\mathcal S}\|\dot{\mathbf{s}}(u)\|_{\mathcal S}{\rm d}u<+\infty$.
\end{proof}

Our topological version of Purcell's scallop theorem will be obtained by reinterpreting Proposition~\ref{PRO_flapping}, 
in the frame of differential geometry, using the classical notion of universal cover (see for instance \citep{Morita:2001aa} for an introduction to covering manifolds), which we now recall the definition: 
For any (finite dimensional) smooth connected Riemannian manifold $X$,
the universal cover of $X$ is a simply connected smooth Riemannian manifold $\widehat{X}$
endowed with a canonical projection $p:\widehat{X}\rightarrow X$ enjoying the following property: 
For every $x$ in $X$ and $y$ in $\widehat{X}$
satisfying $p(y)=x$, there exists a neighborhood $U_y$ of $y$ in $\widehat{X}$ and a neighborhood
$U_x$ of $x$ in $X$ such that $p_{|U_y}:U_y\rightarrow U_x$ be an isometric diffeomorphism. Any vector
field $v$ on $X$ can be lifted to $\widehat{X}$ by defining locally
$\hat{v}(y)=(T_y p)^{-1} v(p(y))$
for any $y$ in $U_y$.
Any curve $\tau:[0,T]\rightarrow X$ solution to the ODE $\dot{\tau}=v(\tau)$ can hence
be lifted to $\widehat{X}$ as well by choosing any base point $y_0$ in
$p^{-1}(\tau(0))$, and by considering the solution to the ODE $\dot{y}=\hat{v}(y)$
with initial condition $y(0)=y_0$.

The Banach structure of $\Shape$ induces a Riemannian structure $g_X$ on $X$.
This Riemannian structure is compatible with the topology of $X$.
Seen as a one dimensional manifold (endowed with its own topology), $X$ can be either compact (or equivalently bounded for $g_X$, and hence diffeomorphic to $\mathcal S^1$, the one dimensional torus) or not (and hence diffeomorphic to $\mathbf{R}$). In both cases, the universal
cover of this manifold is $\mathbf{R}$. With this material, we can restate Proposition \ref{PRO_flapping}
as follows:
\begin{theorem}[Generalized scallop theorem]\label{PRO_topol_flapping}
Consider any smooth shape function  $t\in\mathbf R_+\mapsto\mathbf{s}(t)\in X$ and any lift $\hat{\mathbf{s}}:\mathbf R_+\rightarrow \widehat{X}$ of $\mathbf{s}$ (this choice is unique up to the choice of the base point $\hat{\mathbf{s}}(0)$ in $p^{-1}(\mathbf{s}(0))$).
If the subset $\hat{\mathbf{s}}(\mathbf{R}_+)$ of $\widehat{X}$ is of finite length, or equivalently if the topological closure in $\widehat{X}$ of ${\hat{\mathbf{s}}(\mathbf{R}_+)}$ is compact,
 then any solution $\mathbf{q}:{\mathbf R}_+\rightarrow \Q$ to Equation (\ref{EQ_main}) is bounded as well.
\end{theorem}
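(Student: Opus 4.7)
The plan is to reduce the theorem to a direct Cauchy-Lipschitz argument on a \emph{compact} time interval by exploiting the reparameterization property of Proposition \ref{THE_reparam}. The covering map $p:\widehat{X}\to X$ supplies the natural mechanism for this reduction: it lets the shape curve $\mathbf{s}(\mathbf R_+)\subset X$ (which may wind many times around $X$ when $X\simeq\mathcal S^1$) be ``unfolded'' into a curve sitting inside a compact subset of the real line, on which one already knows how to control the dynamics.

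The steps I envisage are as follows. First, since $\widehat{X}\simeq\mathbf R$, the connected subset $\hat{\mathbf s}(\mathbf R_+)$ of $\widehat{X}$ is an interval, and being of finite length it is bounded, hence contained in some compact interval $[a,b]\subset\widehat{X}$. Second, set $\tilde{\mathbf s}:=p|_{[a,b]}:[a,b]\to X$; this is smooth because $p$ is a local isometric diffeomorphism, and by the very defining property of the lift, $\mathbf s(t)=p(\hat{\mathbf s}(t))=\tilde{\mathbf s}(\hat{\mathbf s}(t))$ for every $t\in\mathbf R_+$. Writing $\beta:=\hat{\mathbf s}:\mathbf R_+\to[a,b]$, we thus have $\mathbf s=\tilde{\mathbf s}\circ\beta$.

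Third, I would solve Equation (\ref{EQ_main}) on $[a,b]$ with shape function $\tilde{\mathbf s}$, imposing the initial condition at the point $\hat{\mathbf s}(0)\in[a,b]$ so that $\tilde{\mathbf q}(\hat{\mathbf s}(0))=\mathbf q(0)$. The Cauchy-Lipschitz theorem delivers a continuous $\tilde{\mathbf q}:[a,b]\to\Q$ whose image $\tilde{\mathbf q}([a,b])$ is compact, hence bounded in $\Q$ (exactly as in the proof of Proposition \ref{PRO_flapping}). Fourth, applying Proposition \ref{THE_reparam} to the reparameterization $\beta$, the solution of (\ref{EQ_main}) driven by $\mathbf s=\tilde{\mathbf s}\circ\beta$ with the chosen initial datum is precisely $\mathbf q=\tilde{\mathbf q}\circ\hat{\mathbf s}$. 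In particular $\mathbf q(\mathbf R_+)\subset\tilde{\mathbf q}([a,b])$, which yields the announced boundedness.

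No genuine obstacle arises; the statement is essentially a reformulation of Proposition \ref{PRO_flapping} once the role of the cover is understood. The only subtle point is that $\tilde{\mathbf s}=p|_{[a,b]}$ need not be injective (when $X\simeq\mathcal S^1$ it may revisit the same point of $X$ several times), but injectivity plays no part in the argument: only the smoothness of $\tilde{\mathbf s}$ and the compactness of its domain are used. The equivalence stated in the hypothesis, between $\hat{\mathbf s}(\mathbf R_+)$ having finite length and having compact closure in $\widehat X$, follows at once from $\hat{\mathbf s}(\mathbf R_+)$ being a connected subset of $\mathbf R$, hence an interval, whose length is finite if and only if it is bounded.
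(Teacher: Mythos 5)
Your proof is correct and follows essentially the same route as the paper: both reduce the dynamics to a shape function defined on a compact interval and then invoke Proposition~\ref{PRO_flapping} (via Proposition~\ref{THE_reparam}). The only cosmetic difference is that the paper reparameterizes $\hat{\mathbf s}(\mathbf R_+)$ by arc-length and writes $\mathbf s = p\circ\hat\tau\circ\beta$, whereas you use the interval $[a,b]\subset\widehat X\simeq\mathbf R$ directly with $\beta=\hat{\mathbf s}$, which is arguably a little cleaner.
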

On Fig.~\ref{comment_the} and according to the theorem, only the shape-changes relating to the third case can result in locomotion. 
\begin{proof}
Assume that the path $\hat{\mathbf{s}}(\mathbf{R}_+)\subset\widehat X$ is of finite length $l$ (with $l>0$) and denote by $\hat{\tau}:t\in[0,l]\mapsto\hat{\tau}(t)\in\hat{\mathbf{s}}(\mathbf{R}_+)$ its arc-length parameterization.
Then, there exists a smooth function $\beta:\mathbf{R}_+\rightarrow [0,l]$ such that $\mathbf{s}=p\circ \hat{\tau}\circ \beta$ and
the conclusion follows from Proposition \ref{PRO_flapping}.
\end{proof}
The following comments are worth being considered:
\begin{itemize}
\item The geometric hypothesis of the theorem is independent of
 the choice of the base point $\hat{\mathbf{s}}_0$. 
 \item As already mentioned, the case where the shape function
$t\in\mathbf R_+\mapsto{\mathbf{s}}(t)\in\mathcal S$ is not periodic and $\hat{\mathbf s}(\mathbf R_+)$ of infinite length agrees with the hypothesis, whereas it is not covered by Purcell's original theorem.
\item The topological nature of Purcell's scallop theorem has been known for quite a long time. 
For instance, in \citep{Raz:2007aa}, an interpretation of periodic shape-changes is given in term of retract. 
This result could be extended to non periodic and possibly non compact shape-changes by saying that the closure of $\hat{\mathbf s}(\mathbf R_+)$ in $\widehat X$ has to be homotopic to a compact set.
\item In \citep{DeSimone:2009aa}, the theorem is connected to the exactness of some closed differential 1-form. 
Notice that in the simply connected universal cover, exactness and closedness of differential 1-form are actually equivalent. 
\end{itemize}
 \begin{figure}\label{FIG_universal_cover}
 \centerline{\input{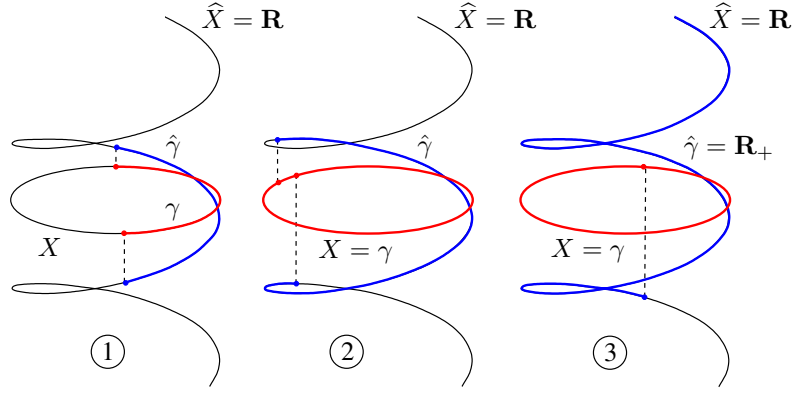}}
 \caption{\label{comment_the}Denote $\hat\gamma=\hat{\mathbf s}(\mathbf R_+)$. In cases 1 and 2, the shape-changes cannot result in locomotion.
Locomotion is possible in the third case only.}
 \end{figure}
\section{Swimmer at low and high Reynolds numbers}\label{SEC_examples}
\label{SEC:modelling}
In this Section we derive the Euler-Lagrange equations for low and high Reynolds
swimmers. We show that, although the properties of the fluid are completely
different in both cases, the equations eventually agree with the general form (\ref{EQ_main}) of linear swimmers.
In the modeling, we will assume that:
\begin{enumerate}[(i)]
\item\label{first_sym} The swimmer is alone in the fluid and the fluid-swimmer system fills the whole space. It entails that all of the positions in the fluid are equivalent and the equations of motion can be written with respect to a frame attached to the swimmer.
\item \label{second_sym}The buoyant force is neglected.
\item \label{third_sym}The fluid-swimmer system is at rest at the initial time.
 \end{enumerate}
\subsection{Kinematics}
The shape-changing body occupies a domain $\mathcal B$ of $\mathbf R^3$ and $\mathcal
F:=\mathbf R^3\setminus \bar{\mathcal B}$ is the domain occupied by the
surrounding fluid.
We consider a Galilean fixed frame $(\mathbf e_1,\mathbf e_2,\mathbf e_3)$ and a
moving frame $(\mathbf e_1^\ast,\mathbf e_2^\ast,\mathbf e_3^\ast)$ attached to
the body.  At any time there
exists $R\in{\rm SO}(3)$ such that $\mathbf e_j^\ast=R\mathbf e_j$ and we assume
that the origin of the latter frame coincides with the center of mass $\mathbf
r\in\mathbf R^3$ of the body. We introduce the notation $\mathbf q:=(R,\mathbf
r)$, which belongs to the Euclidean group ${\mathcal Q}:={\rm
SO}(3)\times\mathbf R^3$. The Eulerian rigid velocity field of the frame
$(\mathbf e_j^\ast)$ with respect to $(\mathbf e_j)$ is defined at any point
$x\in\mathbf R^3$ by $\mathbf w_r(x):=\boldsymbol\omega\times(x-\mathbf
r)+\mathbf v$, where $\mathbf v:=\dot{\mathbf r}$ and $\boldsymbol\omega$ is the
rotation vector defined by $\dot RR^Tx=\boldsymbol\omega\times x$ for all
$x\in\mathbf R^3$.
The shape changes are described by means of a set of diffeomorphisms
$\chi_{\mathbf s}$, indexed by the shape variable $\mathbf{s}$, and that map a
reference domain (let say for instance the unit ball $B$) onto the domain
$\mathcal B^\ast$ of the body as seen by an observer attached to the moving
frame $(\mathbf e_j^\ast)$. The Eulerian velocity at any point $x$ of the
swimmer is the sum of the rigid velocity and the velocity of deformation:
$\mathbf w=\mathbf w_r+\mathbf w_d$ where $\mathbf
w_d:=R\langle\partial_{\mathbf s}\chi_{\mathbf s}(\chi_{\mathbf s}(R^T(x-\mathbf
r))^{-1}),\dot{\mathbf s}\rangle$. It can be expressed in the moving frame:
$\mathbf w^\ast=\mathbf w_r^\ast+\mathbf w_d^\ast$ where $\mathbf
w_r^\ast:=\boldsymbol\omega^\ast\times (x^\ast)^T+\mathbf v^\ast$, $\mathbf
v^\ast=R^T\mathbf v$, $x^\ast=R^T(x-\mathbf r)$,
$\boldsymbol\omega^\ast:=R^T\boldsymbol\omega$ and $\mathbf
w_d^\ast=\langle\partial_{\mathbf s}\chi_{\mathbf s}(x^\ast),\dot{\mathbf
s}\rangle$  (more generally, quantities will be denoted with an asterisk when
expressed in the moving frame). The deformation tensor is $\mathbb F_{\mathbf
s}:=D\chi_{\mathbf s}$ and, keeping the classical notation of Continuum
Mechanics, we introduce $J_{\mathbf s}:=|\det(\mathbb F_{\mathbf s})|$.
 \begin{figure} \label{FIG_Kinematic}
 \centerline{\input{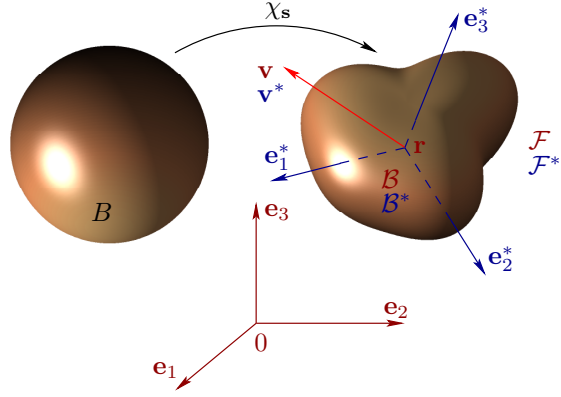}}
 \caption{Kinematics of the model: The Galilean frame $(\mathbf e_j)$ and the
 moving frame $(\mathbf e_j^\ast)$ with $\mathbf e_j^\ast=R\mathbf e_j$
 ($R\in{\rm SO}(3)$). Quantities are denoted with an asterisk when expressed in
 the moving frame. The domain of the body $\mathcal B^\ast$ is the image of the
 unit ball $B$ by a diffeomorphism $\chi_{\mathbf s}$ ($\mathbf s\in\mathcal S$
 is the shape variable) and $\mathcal F$ is the domain of the fluid. The center
 of mass of the body is denoted $\mathbf r$ and $\mathbf v$ is its velocity.}
 \end{figure}
\subsection{Dynamics}
The density of the body can be deduced from a given constant density
$\varrho_0>0$, defined in $B$, according to the conservation of mass principle:
$\varrho^\ast\circ\chi_{\mathbf s}=\varrho_0/J_{\mathbf s}$. The volume of the
swimmer is ${\rm Vol}=\int_BJ_{\mathbf s}(x){\rm d}x$, its mass $m=\varrho_0{\rm
Vol}$ and its inertia tensor $\mathbb I_{\mathbf s}^\ast=\varrho_0\int_B|\chi_{\mathbf
s}|^2\mathbb I{\rm d}-\chi_{\mathbf s}\otimes\chi_{\mathbf s}{\rm d}x$ in
$(\mathbf e_j^\ast)$ and $\mathbb I_{\mathbf s}=R\mathbb I^\ast_{\mathbf s}R^T$
in $(\mathbf e_j)$.  The deformations have to result from the work of internal forces
within the body. It means that in the absence of fluid, the swimmer is not
able to modify its linear and angular momenta. Assuming that the swimmer is at
rest at some instant, we deduce that at any time
$\int_{B}\langle\partial_{\mathbf s}\chi_{\mathbf s},\dot{\mathbf s}\rangle{\rm
d}x=0$ and $\int_{B}\langle\partial_{\mathbf s}\chi_{\mathbf s},\dot{\mathbf
s}\rangle\times\chi_{\mathbf s}{\rm d}x=0$. These equations have to be
understood as constraints on the shape variable and will be termed subsequently
the {\it self-propulsion hypotheses}. The fluid obeys, in the whole generality,
to the Navier-Stokes equations for incompressible fluid:
$\varrho_f\frac{D}{Dt}\mathbf u-\nabla\cdot{\mathbb T}=0$ and
$\nabla\cdot\mathbf u=0$ in $\mathcal F$ for all $t>0$ ($\varrho_f$ is the
fluid's density,  $\mathbf u$ the Eulerian velocity, $D/Dt$ the convective
derivative, ${\mathbb T} := 2\mu D(\mathbf u)-p\mathbb I{\rm d}$ with $D(\mathbf
u):=(1/2)(\nabla\mathbf u+\nabla\mathbf u^T)$ is the stress tensor and $\mu$ the
dynamic viscosity). The rigid displacement of the body is governed by
Newton's laws for linear and angular momenta: $m\frac{d}{dt}{\mathbf
v}=-\int_{\partial{\mathcal B}}{\mathbb T} n\,{\rm d}\sigma$ and
$\frac{d}{dt}(\mathbb I_{\mathbf s}\boldsymbol\omega)=-\int_{\partial\mathcal
B}{\mathbb T} n \times(x-\mathbf r)\,{\rm d}\sigma$ (the rigid displacement is
caused  by the hydrodynamical forces only) where $n$ is the unit vector to $\partial\mathcal B$ directed towards the interior of $\mathcal B$. These equations have to be supplemented
with boundary conditions on $\partial\mathcal B$, which can be either $\mathbf
u\cdot n =\mathbf w\cdot n$ (slip boundary conditions) or $\mathbf u =\mathbf w$
(no-slip boundary conditions) and with initial data: $\mathbf u(0)=\mathbf u_0$,
$R(0)=R_0$, $\mathbf r(0)=\mathbf r_0$,
$\boldsymbol\omega(0)=\boldsymbol\omega_0$ and $\mathbf v(0)=\mathbf v_0$.

We focus on two limit problems connecting to the value of the Reynolds number
${\rm Re}:=\varrho VL/\mu$ ($V$ is the mean fluid velocity and $L$ is a
characteristic linear dimension). The first case ${\rm Re}\ll1$ concerns low Reynolds swimmers like bacteria (or more generally so-called micro swimmers whose size is
about $1\mu m$). For the second ${\rm Re}\gg1$, we will restrain our study to
irrotational flows (i.e. $\nabla\times\mathbf u=0$) and so it is relevant for
large  animals swimming quite slowly, a case where vorticity can be neglected.
\subsection{Low Reynolds swimmers}
For micro-swimmers, scientists agree that inertia (for both the fluid and the
body) can be neglected in the dynamics. It means that in the modeling, we can
set $\varrho_0=\varrho_f=0$. In this case, the Navier-Stokes equations reduce to
the steady Stokes equations $-\nabla\cdot{\mathbb T} =0$, $\nabla\cdot\mathbf u=0$ and
we choose no-slip boundary conditions $\mathbf u=\mathbf w$ on $\partial\mathcal
B$. Introducing $\mathbf u^\ast(x^\ast):=R^T\mathbf u(Rx^\ast+\mathbf r)$ and
$p^\ast(x^\ast)=p(Rx+\mathbf r)$, the equations keep the same form when expressed in the frame $(\mathbf e_1^\ast,\mathbf e_2^\ast)$,
namely: $-\nabla\cdot{\mathbb T}^\ast=0$, $\nabla\cdot\mathbf u^\ast=0$ in
$\mathcal F^\ast$ with boundary data: $\mathbf u^\ast=\mathbf w^\ast$. From a
mathematical point of view, the main advantage is that the equations are now
linear. Notice that since the equations are stationary, no initial data is
required for the fluid. Newton's laws read $\int_{\partial\mathcal
B^\ast}{\mathbb T}^\ast n{\rm d}\sigma=0$ and $\int_{\partial\mathcal
B^\ast}{\mathbb T}^\ast n\times x^\ast{\rm d}\sigma=0$ (it means that
the system fluid-swimmer is in equilibrium at every moment. Indeed, since
there is no mass, any force would produce an infinite acceleration). As
already mentioned, the Stokes equations are linear. It entails that the solution
$(\mathbf u^\ast,p^\ast)$ is linear with respect to the boundary data $\mathbf
w^\ast$ and we draw the same conclusion for the stress tensor $\sigma^\ast$
because it is linear in $(\mathbf u^\ast,p^\ast)$. Observe now that $\mathbf
w^\ast$ is linear in the 3 components $\omega^\ast_j$ ($j=1,2,3$) of
$\boldsymbol\omega^\ast$, in the 3 components $v_j^\ast$ ($j=1,2,3$) of $\mathbf
v^\ast$ and in $\dot{\mathbf s}$. We can then decompose any solution to the
Stokes equations accordingly: $\mathbf u^\ast=\sum_{j=1}^3\omega^\ast_j\mathbf
u^\ast_j+v^\ast_j\mathbf u^\ast_{j+3}+\langle\mathbf u_d^\ast,\dot{\mathbf
s}\rangle$, $p^\ast=\sum_{j=1}^3\omega^\ast_j
p^\ast_j+v^\ast_jp^\ast_{j+3}+\langle p_d^\ast,\dot{\mathbf s}\rangle$ and the
stress tensor as well: $\mathbb T^\ast=\sum_{j=1}^3\omega^\ast_j\mathbb
T^\ast_j+v^\ast_j\mathbb T^\ast_{j+3}+\langle\mathbb T_d^\ast,\dot{\mathbf
s}\rangle$. Notice that the {\it elementary solutions} $(\mathbf
u_j^\ast,p_j^\ast)$ as well as the {\it elementary stress tensors} $\mathbb
T^\ast_j$ depend on the shape variable $\mathbf s$ only. We next introduce the
$6\times 6$ matrix $\mathbb M^r(\mathbf s)$ whose entries $M^r_{ij}(\mathbf s)$
are $M_{ij}^r(\mathbf s):=\int_{\partial\mathcal B^\ast}\mathbf
e^\ast_i\cdot(\mathbb T_j^\ast n\times x^\ast){\rm
d}\sigma=\int_{\partial\mathcal B^\ast}(x^\ast\times \mathbf
e^\ast_i)\cdot\mathbb T^\ast_jn{\rm d}\sigma$ ($1\leq i\leq 3$, $1\leq j\leq 6$)
and
$M_{ij}^r(\mathbf s):=\int_{\partial\mathcal B^\ast}\mathbf
e^\ast_{i-3}\cdot\mathbb T_j^\ast n{\rm d}\sigma$ ($4\leq i\leq 6$, $1\leq j\leq
6$) and $\mathbb N(\mathbf s)$, the linear continuous map from $\mathcal S$ into
$\mathbf R^6$ defined by $\langle \mathbb N({\mathbf s}),\dot{\mathbf
s}\rangle:=(\int_{\partial\mathcal B^\ast}\langle\mathbb T_d^\ast,\dot{\mathbf
s}\rangle n\times x^\ast{\rm d}\sigma,\int_{\partial\mathcal
B^\ast}\langle\mathbb T_d^\ast,\dot{\mathbf s}\rangle n{\rm d}\sigma)$.  We can
rewrite Newton's laws as $\mathbb M^r({\mathbf s})\dot{\mathbf
q}^\ast+\langle\mathbb N({\mathbf s}),\dot{\mathbf s}\rangle=0$ where
$\dot{\mathbf q}^\ast:=(\boldsymbol\omega^\ast,\mathbf v^\ast)^T\in\mathbf R^6$.
Upon an integration by parts, we get the equivalent definition for he entries of
the matrix $\mathbb M^r(\mathbf s)$: $M_{ij}^r(\mathbf s):=2\mu\int_{\mathcal
B^\ast}D(\mathbf u^\ast_i):D(\mathbf u^\ast_j){\rm d}x^\ast$, whence we deduce
that $\mathbb M^r(\mathbf s)$ is symmetric and positive definite. The same
arguments for $\mathbb N(\mathbf s)$ lead to the identity: $(\langle \mathbb
N(\mathbf s),\dot{\mathbf s}\rangle)_j=2\mu\int_{\mathcal B^\ast}D(\mathbf
u^\ast_j):D(\langle \mathbf u^\ast_d(\mathbf s),\dot{\mathbf s}\rangle){\rm
d}x^\ast$. We eventually obtain the Euler-Lagrange equation: $\dot{\mathbf
q}^\ast=-\mathbb M^r(\mathbf s)^{-1}\langle \mathbb N(\mathbf s),\dot{\mathbf
s}\rangle$, or equivalently $\dot{\mathbf q}=-\mathbb R(\mathbf q)\mathbb
M^r(\mathbf s)^{-1}\langle \mathbb N(\mathbf s),\dot{\mathbf s}\rangle$ where
$\mathbb R(\mathbf q):={\rm diag}(R,R)$ and $\dot{\mathbf
q}:=(\boldsymbol\omega,\mathbf v)^T$. Although this modeling is not new, the
authors were not able to find the Euler-Lagrange equation in this particular
form, allowing one in particular to deduce the following result:
\begin{prop}
The dynamics of a micro-swimmer is independent of  the viscosity of the fluid.
Or, in other words, the same shape changes produce the same rigid displacement,
whatever the viscosity of the fluid is.
\end{prop}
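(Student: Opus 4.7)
The plan is to read off the dependence on $\mu$ directly from the two bilinear expressions that the excerpt has just derived, namely
$$M_{ij}^r(\mathbf s)=2\mu\int_{\mathcal B^\ast}D(\mathbf u^\ast_i):D(\mathbf u^\ast_j)\,{\rm d}x^\ast,\quad (\langle\mathbb N(\mathbf s),\dot{\mathbf s}\rangle)_j=2\mu\int_{\mathcal B^\ast}D(\mathbf u^\ast_j):D(\langle\mathbf u^\ast_d(\mathbf s),\dot{\mathbf s}\rangle)\,{\rm d}x^\ast,$$
and to check that the remaining ingredients (the elementary Stokes velocities $\mathbf u_j^\ast$ and the deformation velocity $\mathbf u_d^\ast$) are themselves independent of $\mu$.

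First I would establish this independence by a standard rescaling argument on the Stokes system. Given $\mu>0$, the elementary velocity $\mathbf u_j^\ast$ and pressure $p_j^\ast$ solve $-\mu\Delta\mathbf u_j^\ast+\nabla p_j^\ast=0$, $\nabla\cdot\mathbf u_j^\ast=0$ in $\mathcal F^\ast$ with a prescribed boundary datum on $\partial\mathcal B^\ast$ that depends only on the shape $\mathbf s$ (not on $\mu$). Setting $\tilde p_j^\ast:=p_j^\ast/\mu$ converts the system into $-\Delta\mathbf u_j^\ast+\nabla\tilde p_j^\ast=0$, $\nabla\cdot\mathbf u_j^\ast=0$, together with the same $\mu$-free boundary datum. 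By uniqueness of the Stokes problem, $\mathbf u_j^\ast$ is then independent of $\mu$; the same reasoning applied to $\mathbf u_d^\ast$ yields its $\mu$-independence as well.

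Once this is in hand, both $\mathbb M^r(\mathbf s)$ and $\langle\mathbb N(\mathbf s),\dot{\mathbf s}\rangle$ are products of the common factor $2\mu$ and integrals that depend only on $\mathbf s$. Consequently the scalar $\mu$ factors out of $\mathbb M^r(\mathbf s)$ and out of $\langle\mathbb N(\mathbf s),\dot{\mathbf s}\rangle$, and therefore cancels in the combination $\mathbb M^r(\mathbf s)^{-1}\langle\mathbb N(\mathbf s),\dot{\mathbf s}\rangle$. Plugging this into the Euler--Lagrange equation $\dot{\mathbf q}=-\mathbb R(\mathbf q)\mathbb M^r(\mathbf s)^{-1}\langle\mathbb N(\mathbf s),\dot{\mathbf s}\rangle$ shows that the ODE driving $\mathbf q$ does not contain $\mu$, so for any prescribed shape function $\mathbf s(t)$ and any initial position $\mathbf q_0$ the Cauchy--Lipschitz solution $\mathbf q(t)$ is the same for every value of $\mu>0$; this is precisely the claimed proposition.

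The only step that is not entirely formal bookkeeping is the $\mu$-independence of the elementary velocities, and even there the main point is simply to notice that the Stokes boundary data depend only on the kinematic unknowns $(\boldsymbol\omega^\ast,\mathbf v^\ast,\dot{\mathbf s})$ and on the shape $\mathbf s$; no forcing term carries a $\mu$, so the rescaling $p\mapsto p/\mu$ removes the viscosity from the problem. After this observation the rest is algebra.
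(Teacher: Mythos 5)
Your proof is correct and follows essentially the same route as the paper's. The paper's argument is the same rescaling observation stated more tersely: letting $(\mathbf u_j^\ast, p_j^\ast)$ be an elementary solution at viscosity $\mu$, then $(\mathbf u_j^\ast, (\tilde\mu/\mu)p_j^\ast)$ solves the problem at viscosity $\tilde\mu$, so the elementary velocities are $\mu$-independent, and since the Euler--Lagrange equation ultimately depends only on these velocities the claim follows. You have simply spelled out the step the paper leaves implicit, namely that both $\mathbb M^r(\mathbf s)$ and $\langle\mathbb N(\mathbf s),\dot{\mathbf s}\rangle$ carry the same overall factor $2\mu$ which therefore cancels in $\mathbb M^r(\mathbf s)^{-1}\langle\mathbb N(\mathbf s),\dot{\mathbf s}\rangle$; making this explicit is a useful clarification but not a different approach.
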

\begin{proof}
Let $(\mathbf u_j^\ast,p^\ast_j)$ be an elementary solution (as defined in the
modeling above) to the Stokes equations corresponding to a viscosity $\mu>0$,
then $(\mathbf u_j^\ast,(\tilde\mu/\mu) p^\ast_j)$ is the same elementary
solution corresponding to an other viscosity $\tilde\mu>0$. Since the
Euler-Lagrange equation depends only on the Eulerian velocities $\mathbf
u_j^\ast$, the proof is completed.
\end{proof}

\subsection{High Reynolds swimmers}
Assume now that the inertia is preponderant with respect to the viscous force
(it is the case when ${\rm Re}\ll 1$).
The Navier-Stokes equations simplify into the Euler equations:
$\varrho_f\frac{D}{Dt}\mathbf u-\nabla\cdot\mathbb T=0$, $\nabla\cdot\mathbf
u=0$ in $\mathcal F$ where $\mathbb T=-p{\rm I}d$ and we specify the  boundary
conditions to be: $\mathbf u\cdot n=\mathbf w\cdot n$  on $\partial\mathcal B$
(slip boundary conditions). Like in the preceding Subsection, we will assume that
at some instant, the fluid-body system is at rest. According to Kelvin's
circulation theorem, if the flow is irrotational at some moment (i.e.
$\nabla\times\mathbf u=0$) then, it has always been (and will always remain)
irrotational. We can hence suppose that $\nabla\times\mathbf u=0$ for all times
and then, according to the Helmholtz decomposition, that there exists for all
time $t>0$ a potential scalar function $\varphi$ defined in $\mathcal F$, such
that $\mathbf u=\nabla\varphi$. The divergence-free condition leads to
$\Delta\varphi=0$ and the boundary condition reads: $\partial_n\varphi=\mathbf
w\cdot n$. Following our rule of notation, we introduce the function
$\varphi^\ast(t,x^\ast):=\varphi(t,R^T(x-\mathbf r))$ ($t>0$, $x^\ast\in\mathcal
F^\ast$), which is harmonic and satisfies $\partial_n\varphi^\ast=\mathbf
w^\ast\cdot n$ on $\partial\mathcal B^\ast$. The potential $\varphi^\ast$ is
linear in $\mathbf w^\ast$, so it can be decomposed into
$\varphi^\ast=\sum_{j=1}^3\omega^\ast_j\varphi^\ast_j+v_j^\ast\varphi^\ast_{j+3}
+\langle\varphi^\ast_d,\dot{\mathbf s}\rangle$ (this process is usually referred
to as Kirchhoff's law). At this point, we do not invoke Newton's laws to derive
the Euler-Lagrange equation but rather use the formalism of Analytic Mechanics.
Both approaches (Newton's laws of Classical Mechanics and the Least Action
principle of Analytic Mechanics) are equivalent (as proved in \citep{Munnier:2008ab}),
but the latter is notably simpler and shorter. In the absence of buoyant force, the
Lagrangian function $\mathcal L$ of the body-fluid system coincides with the
kinetic energy: $\mathcal L=m\frac{1}{2}|\mathbf
v^\ast|^2+\frac{1}{2}\boldsymbol\omega^\ast\cdot\mathbb I_{\mathbf
s}^\ast\boldsymbol\omega^\ast+\frac{1}{2}\int_{\mathcal
B^\ast}\varrho^\ast|\mathbf w_d^\ast|^2{\rm d}x^\ast+\frac{1}{2}\int_{\mathcal
F^\ast}\varrho_f|\mathbf u^\ast|^2{\rm d}x^\ast$. In this sum, one can identify,
from the left to the right: the kinetic energy of the body connecting to the
rigid motion (two first terms), the kinetic energy resulting from the
deformations and the kinetic energy of the fluid. We can next compute that:
$\int_{\mathcal B^\ast}\varrho^\ast|\mathbf w_d^\ast|^2{\rm
d}x^\ast=\int_{B}\varrho_0|\langle\partial_{\mathbf s}\chi_{\mathbf
s},\dot{\mathbf s}\rangle|^2{\rm d}x$ (upon a change of variables) and
$\int_{\mathcal F^\ast}\varrho_f|\mathbf u^\ast|^2{\rm d}x^\ast=\int_{\mathcal
F^\ast}\varrho_f|\nabla\varphi^\ast|^2{\rm d}x^\ast$. It leads us to introduce
the so-called mass matrices $\mathbb M^r_f(\mathbf s)$, whose entries
$(M^r_f)_{ij}(\mathbf s)$ are defined by $(M^r_f)_{ij}(\mathbf
s):=\int_{\mathcal
F^\ast}\varrho_f\nabla\varphi^\ast_i\cdot\nabla\varphi^\ast_j{\rm d}x^\ast$
($1\leq i,j\leq 6$), and $\mathbb M^r(\mathbf s):={\rm diag}(\mathbb
I^\ast_{\mathbf s},m\mathbb I{\rm d})+\mathbb M^r_f(\mathbf s)$. One easily
checks that $\mathbb M^r(\mathbf s)$ is symmetric and positive definite. We
define as well the linear map $\mathbb N(\mathbf s)$ from $\mathcal S$ into
$\mathbf R^6$ by $(\langle\mathbb N(\mathbf s),\dot{\mathbf
s}\rangle)_j:=\int_{\mathcal
F^\ast}\varrho_f\nabla\varphi^\ast_j\cdot\nabla\langle\varphi^\ast_d,\dot{
\mathbf s}\rangle{\rm d}x^\ast$ ($1\leq j\leq 6$) and we can rewrite the kinetic
energy of the fluid in the form: $\frac{1}{2}\dot{\mathbf q}^\ast\cdot\mathbb
M^r_f(\mathbf s)\dot{\mathbf q}^\ast+\dot{\mathbf q}^\ast\cdot\langle\mathbb
N(\mathbf s),\dot{\mathbf s}\rangle$. Invoking now the Least Action principle,
we claim that the Euler-Lagrange equation is: $\delta_L\mathcal L=0$ where we
have denoted $\delta_L:=\frac{d}{dt}\frac{\partial}{\partial\dot{\mathbf
q}}-\frac{\partial}{\partial\mathbf q}$ the Lagrangian differential operator
connecting to the system of generalized coordinates $(\mathbf q,\dot{\mathbf
q})$. Introducing the impulses $(\boldsymbol\Pi,\mathbf P)^T:=\mathbb
M^r(\mathbf s)(\boldsymbol\omega^\ast,\mathbf v^\ast)^T$ and
$(\boldsymbol\Lambda,\mathbf L)^T:=\langle\mathbb N(\mathbf s),\dot{\mathbf
s}\rangle$ (homogeneous to momenta) and since $\langle\delta_L\dot{\mathbf
q}^\ast,\dot{\mathbf
Q}\rangle=(\boldsymbol\Omega^\ast\times\boldsymbol\omega^\ast,
\boldsymbol\Omega^\ast\times\mathbf v^\ast-\boldsymbol\omega^\ast\times\mathbf
V^\ast)^T$ for any $\dot{\mathbf Q}:=(\boldsymbol\Omega,\mathbf V)^T\in\mathbf
R^6$ (and $\dot{\mathbf Q}^\ast:=(\boldsymbol\Omega^\ast,\mathbf V^\ast)^T$ with
$\boldsymbol\Omega^\ast:=R^T\boldsymbol\Omega$, $\mathbf V^\ast=R^T\mathbf V$),
we deduce that $\langle\delta_L\mathcal L,\dot{\mathbf
Q}\rangle=\frac{d}{dt}(\boldsymbol\Pi+\boldsymbol\Lambda,\mathbf P+\mathbf
L)\cdot\dot{\mathbf Q}^\ast+(\boldsymbol\Pi+\boldsymbol\Lambda,\mathbf P+\mathbf
L)\cdot(\boldsymbol\Omega^\ast\times\boldsymbol\omega^\ast,
\boldsymbol\Omega^\ast\times\mathbf v^\ast-\boldsymbol\omega^\ast\times \mathbf
V^\ast)$. The Euler-Lagrange equation is hence the system of ODEs:
$\frac{d}{dt}
(\boldsymbol\Pi+\boldsymbol\Lambda)=(\boldsymbol\Pi+\boldsymbol\Lambda)\times
\boldsymbol\omega^\ast+(\mathbf P+\mathbf L)\times\mathbf v^\ast$ and
$\frac{d}{dt}(\mathbf P+\mathbf L)=(\mathbf P+\mathbf
L)\times\boldsymbol\omega^\ast$. Since we have assumed that at some instant, the
fluid-body system is at rest, we deduce that
$(\boldsymbol\Pi+\boldsymbol\Lambda)=0$ and $(\mathbf P+\mathbf L)=0$ for all
time (this solution is an obvious solution to the differential system), which
can eventually be rewritten as: $\dot{\mathbf q}^\ast=-\mathbb M^r(\mathbf
s)^{-1}\langle \mathbb N(\mathbf s),\dot{\mathbf s}\rangle$, or equivalently
$\dot{\mathbf q}=-\mathbb R(\mathbf q)\mathbb M^r(\mathbf s)^{-1}\langle \mathbb
N(\mathbf s),\dot{\mathbf s}\rangle$.


\subsection{Breaking the symmetry}
Both models' responses to symmetry breaking are very different. If we assume that there is a rigid fixed obstacle in the fluid or that the fluid-swimmer system is confined in a bounded domain, the Euler-Lagrange equation for the low Reynolds swimmer still agrees with the form \eqref{EQ_main} and the scallop theorem still holds true. Indeed, although the matrix $\mathbb M^r(\mathbf s)$ is no longer independent of the position $\mathbf q$ and has to be rather denoted $\mathbb M^r(\s, \mathbf q)$, the dynamics still reads $\dot{\mathbf q}=-\mathbb R(\q)\mathbb M^r(\mathbf
s,\q)^{-1}\langle \mathbb N(\mathbf s),\dot{\mathbf s}\rangle$. Things begin to turn bad when additional degrees of freedom enter the game. It is the case when several swimmers are involved (this case is treated in \citep{Lauga:2008aa}), when there is a moving rigid obstacle or when the swimmer is close to flexible walls. 

The high Reynolds swimmer is much more sensible to the relaxation of the hypotheses (\ref{first_sym}-\ref{third_sym}) and actually if any of these assumptions fails to be true, the Euler-Lagrange equation turns into a second order ODE containing a drift term. Obviously, the scallop theorem fails to apply in this case. We refer to \citep{Munnier:2008ab}  and  \citep{Munnier:2010aa} for details.

\section{When Purcell's scallop can swim... in a perfect fluid}
In Section~\ref{SEC:modelling}, we have derived the Euler-Lagrange equations for
both a low and high Reynolds swimmers. In both models, we have assumed that the
fluid-body system was filling the whole space (the only boundary of the fluid
was the one shared with the swimmer). In a potential flow, this hypothesis is necessary for the
Euler-Lagrange equation to have the particular form required in the statement
of  Theorem~\ref{PRO_flapping}. In this Section, we aim to show, through a numerical example, than the scallop theorem no longer
holds true when the fluid contains in addition to the swimmer, a fixed obstacle.
So, we consider the simple example of the scallop (as modeled in the original
article of Purcell) swimming in a perfect fluid with irrotational flow. Simulations have been realized with the Biohydrodynamics Matlab Toolbox 
(which is free, distributed under license GPL and can be downloaded at \url{http://bht.gforge.inria.fr/}).
The scallop is made of two rigid ellipses linked together by a hinge. As shown
on Fig~\ref{FIG_nage_scallop}, the animal is located close by a rectangular obstacle.
This obstacle breaks the symmetry of the model as described in
Section~\ref{SEC:modelling} and although the scallop is only able to flap, each
stroke will make it get closer to the obstacle, until it eventually
collides with it (see Fig.~\ref{figure3}). A more physical explanation is that during a strokes, the
fluid pressure (on which solely relies the hydrodynamical forces) is
weaker between the obstacle and the scallop's left arm.
\arraycolsep = 0cm
\begin{figure}
     \centering
     \begin{tabular}{|c|c|c|}
      \hline
{\includegraphics[width=.31\textwidth]{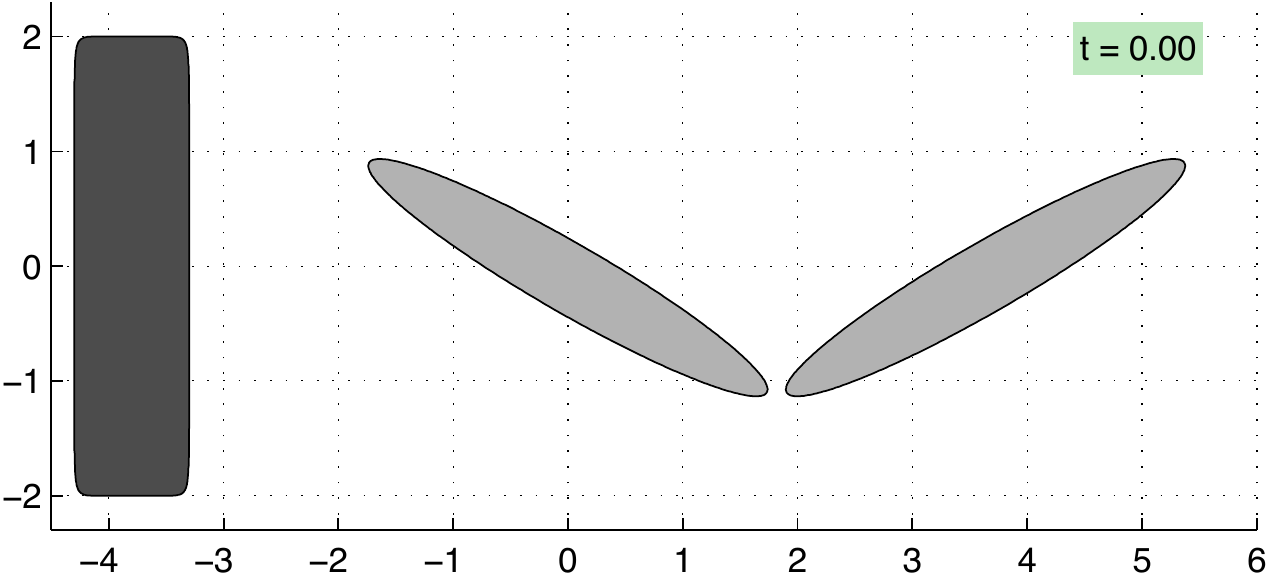}}&
  {\includegraphics[width=.31\textwidth]{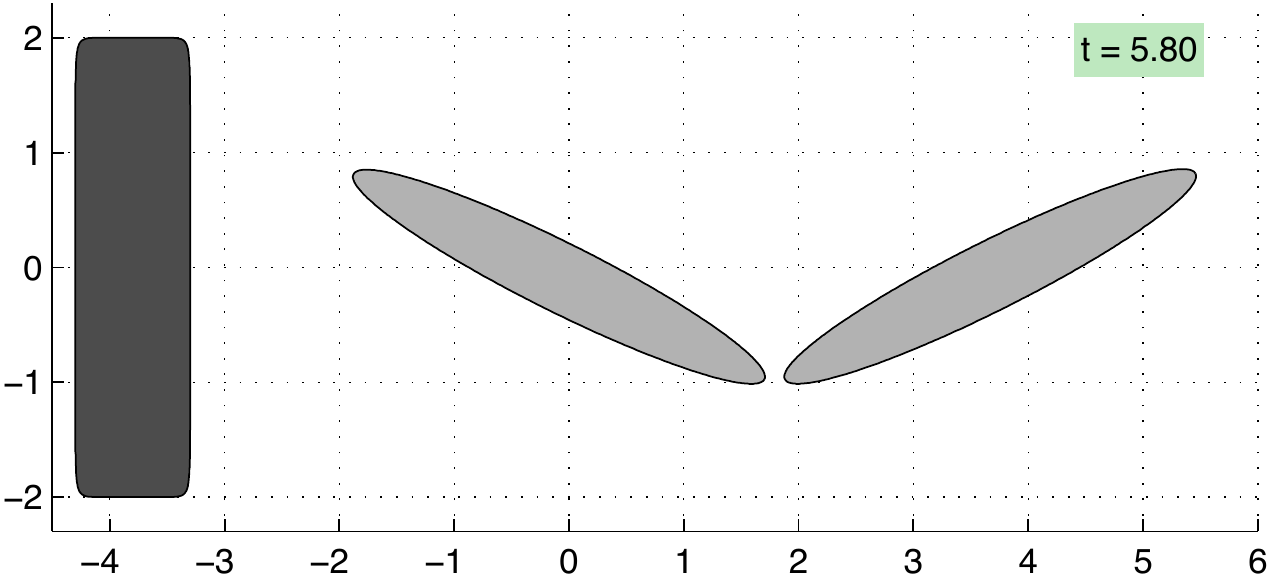}}&
  {\includegraphics[width=.31\textwidth]{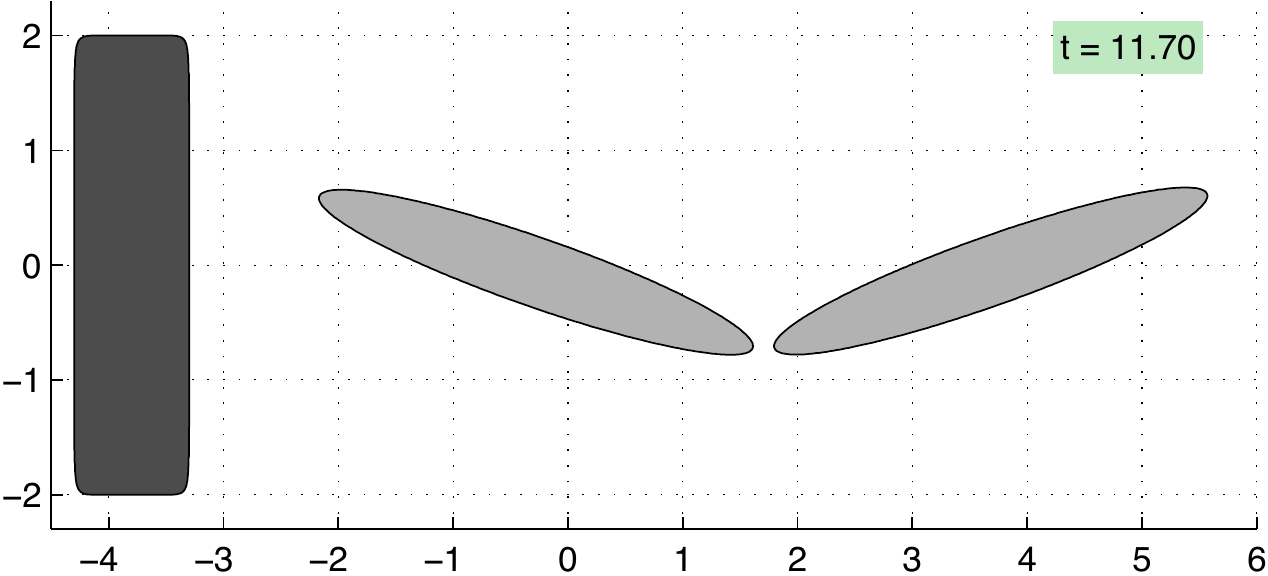}}\\
     \hline
{\includegraphics[width=.31\textwidth]{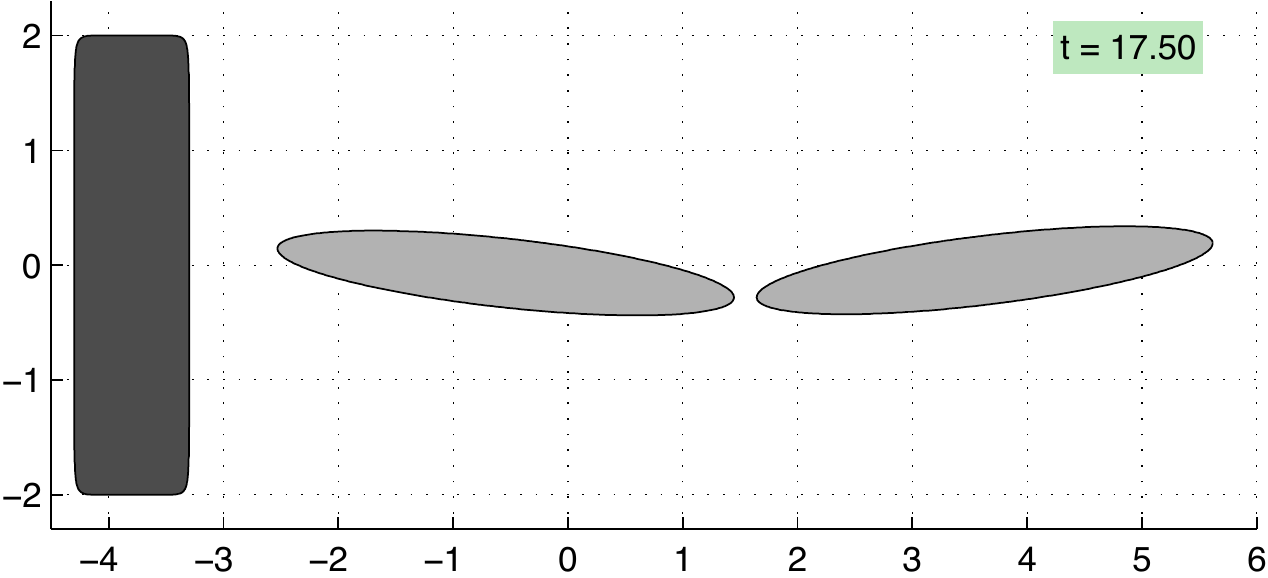}}&
  {\includegraphics[width=.31\textwidth]{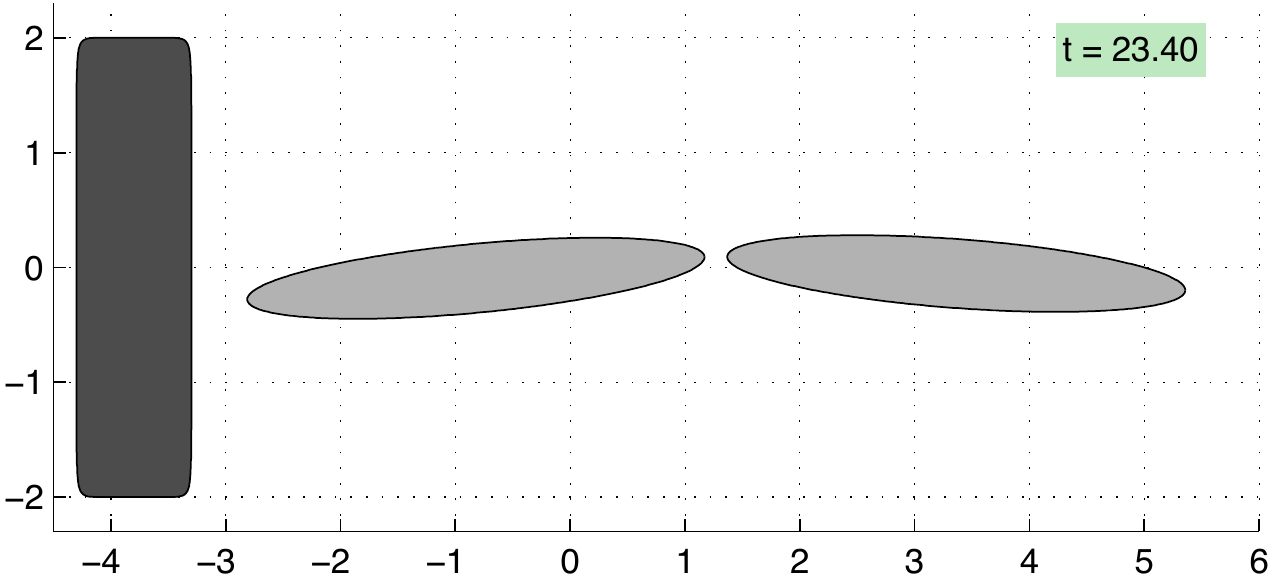}}&
  {\includegraphics[width=.31\textwidth]{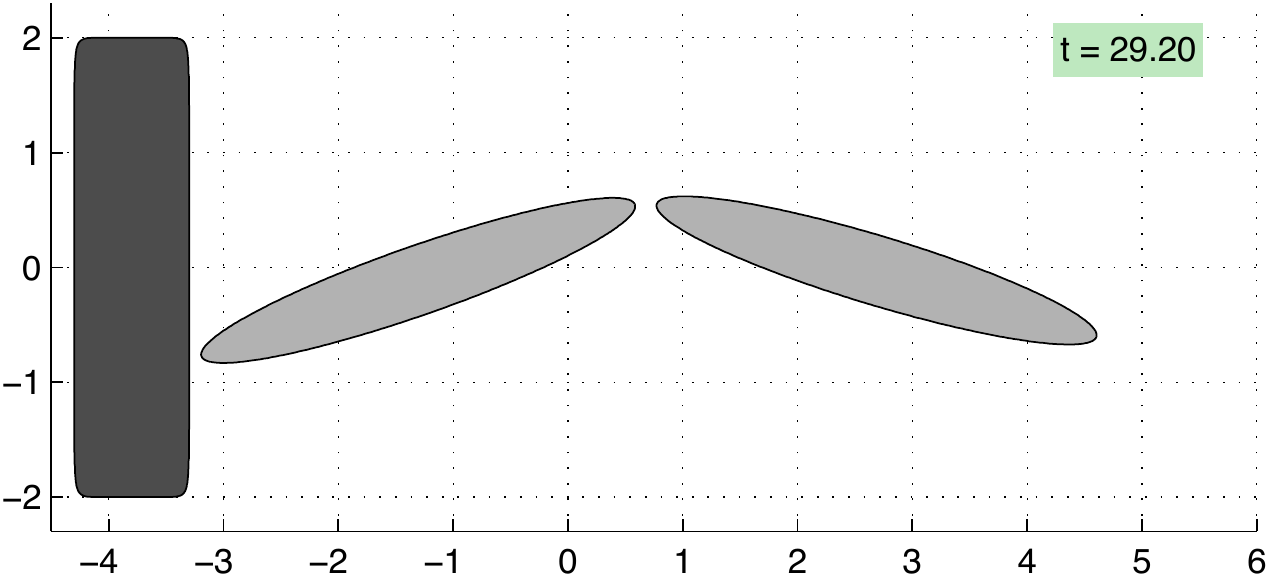}}\\
  \hline
   \end{tabular}
 \caption{\label{FIG_nage_scallop}Purcell's scallop example (in a perfect fluid with
potential flow). The scallop is modeled as an articulated body consisting of two ellipses
linked together by a hinge. The angle between the ellipses is
$\alpha(t)=\pi/3\cos(t)$ ($t>0$). The flapping of the scallop does not
 produce locomotion (after completing a stroke, the scallop
comes back to its exact initial position) in a fluid free of obstacles. However, this is no longer true in
this example where a fixed immersed rigid solid breaks the symmetry of the model.
The flapping motion generates a low pressure zone between the scallop's left arm
and the obstacle, causing the animal to be attracted by the solid and
eventually to collide with it.}
 \end{figure}

 \begin{figure}
     \centering
     \begin{tabular}{|c|c|c|}
     \hline
{\includegraphics[width=.49\textwidth]{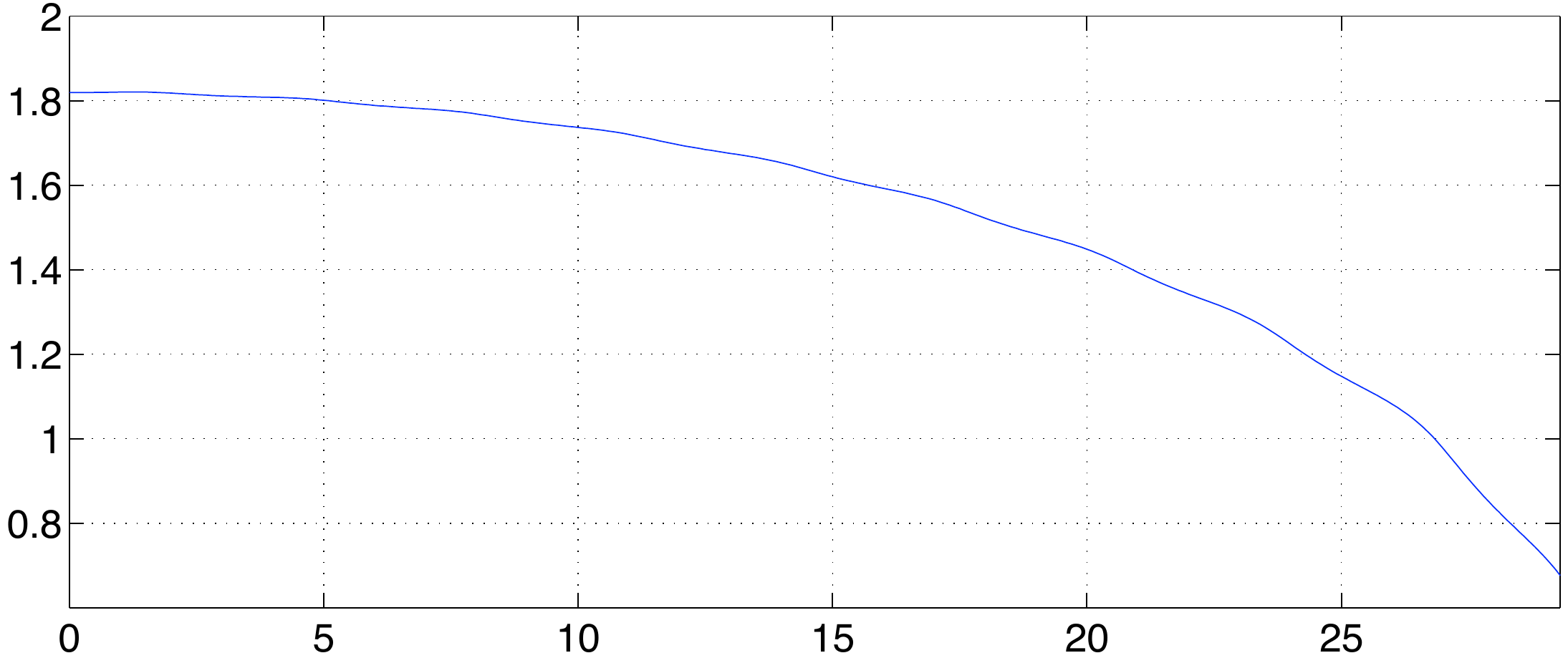}}&
  {\includegraphics[width=.49\textwidth]{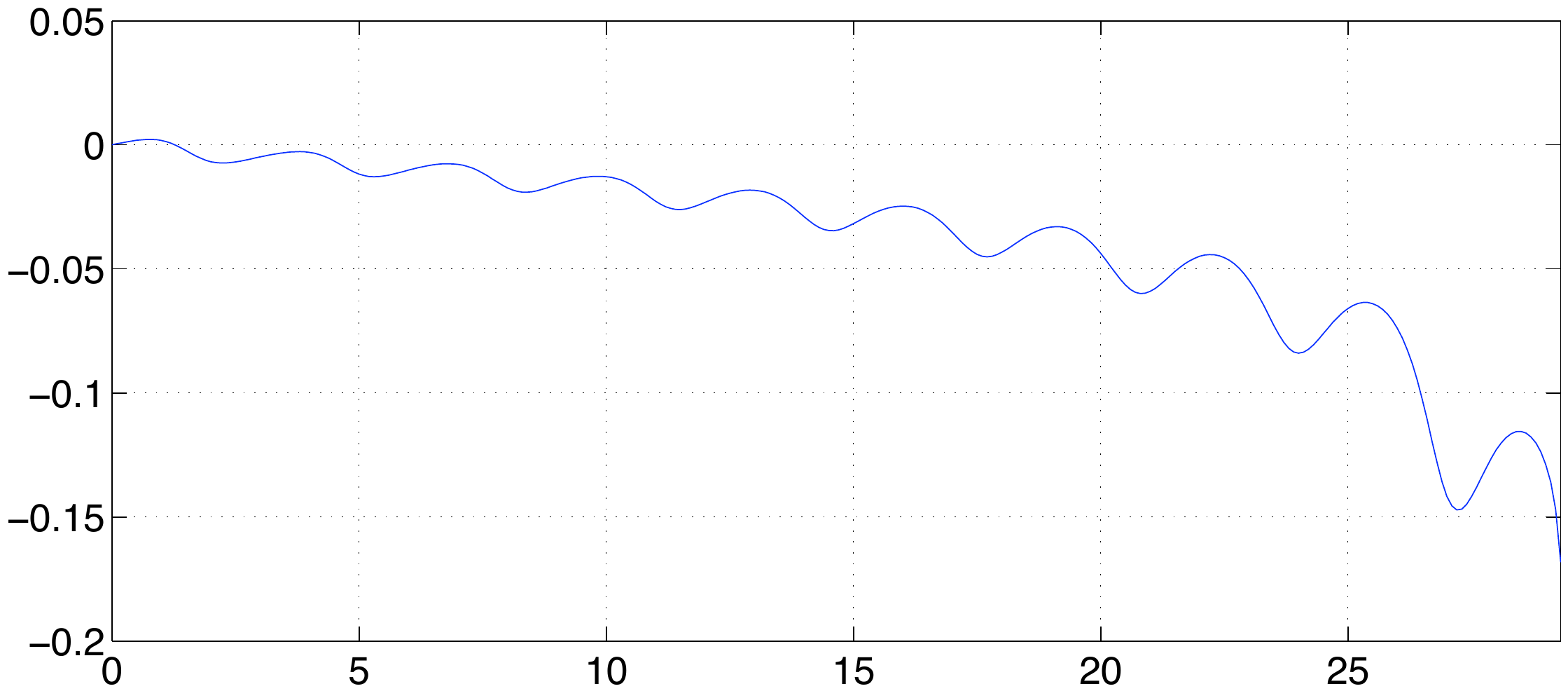}}\\
  \hline
  \end{tabular}
 \caption{\label{figure3}On the left hand side of the figure is plotted the
$x$-coordinate of the center of mass of the scallop with respect to time and on
the right hand side, the $x$-coordinate of its velocity with respect to time.
Owning to the presence of the fixed obstacle, the scallop undergoes a net
displacement to the left. Notice that the velocity increases along with the number of strokes.}
 \end{figure}

\section{Conclusion}
In this article, we have revisited Purcell's scallop theorem and proved that the common hypotheses on the sequence of shape-changes: time periodicity and time reversal invariance, although quite intuitive, are irrelevant from a mathematical point of view and have to be replaced by purely geometric conditions involving the universal cover of the configuration space. We have also shown that Purcell's result applies as well to swimmers at high Reynolds numbers and does not rely solely on the system's inertialess.

\bibliographystyle{abbrvnat}

\nocite{*}
\bibliography{bibli_scallop}

\end{document}